\newtheorem{thm}{Theorem}
\date{}
\theoremstyle{definition}
\newcommand{\mrm}{\mathrm}
\newcommand{\ovrl}{\overline}
\newcommand{\mc}{\mathcal}
\newcommand{\eqb}{\begin{equation}}
\newcommand{\eqe}{\end{equation}}
\begin{document}
\title{Shaping Long-lived Electron Wavepackets for Customizable Optical Spectra}
\author{Rumen Dangovski}
\email{rumenrd@mit.edu}
\affiliation{Department of Physics, Massachusetts Institute of Technology, Cambridge, Massachusetts 02139, USA}
\author{Nicholas Rivera}
\affiliation{Department of Physics, Massachusetts Institute of Technology, Cambridge, Massachusetts 02139, USA}
\author{Ido Kaminer}
\affiliation{Department of Electrical Engineering, Technion -- Israel Institute of Technology, Haifa 32000, Israel}
\author{Marin Solja\v{c}i\'{c}}
\affiliation{Department of Physics, Massachusetts Institute of Technology, Cambridge, Massachusetts 02139, USA}
\date{\today}

\begin{abstract}
Electrons in atoms and molecules are versatile physical systems covering a vast range of light-matter interactions, enabling the physics of Rydberg states, photon-photon bound states, simulation of condensed matter Hamiltonians, and quantum sources of light. A limitation on the versatility of such electronic systems for optical interactions would appear to arise from the discrete nature of the electronic transitions and from the limited ionization energy, constraining the energy scale through which the rich physics of bound electrons can be accessed. In this work, we propose the concept of shaping spatially confined electron wavepackets as superpositions of extended states in the ionization continuum. These wavepackets enable customizable optical emission spectra transitions in the eV-keV range. We find that the specific shaping lengthens the diffraction lifetime of the wavepackets in exchange for increasing their spatial spreads. Finally, we study the spontaneous radiative transitions of these excited states, examining the influence of the wavepacket shape and size on the radiation rate of the excited state. We observe that the rate of radiative capture is primarily limited by the diffraction lifetime of the wavepacket. The approach proposed in this work could have applications towards developing designer emitters at tunable energy and length scales, potentially bringing the physics of Rydberg states to new energy scales and extending the range and versatility of emitters that can be developed using shaped electrons.
\end{abstract}

\maketitle
The rich physics of bound electrons in atoms and molecules is typically limited by the discrete nature of the energy spectrum and by the ionization energy threshold. This is why many phenomena considered in atomic physics are usually confined to the IR-UV spectral range. The upper limit is set by the typical ionization energies of the outer valence electrons. Going beyond the ionization energy usually involves electron states that are not bound in space---extended states---which have positive energy and are therefore not considered relevant for atomic physics phenomena. To deal with this problem, one could engineer the potential to allow for the existence of bound states in the continuum (BICs) \cite{hsu2016}. Von Neumann and Wigner were the first to introduce such a construction, defying the conventional wisdom that bound states must be spectrally separated from the extended states \cite{neumann1929}. However, the kind of potentials which support BICs have to be specially designed, while the potentials of atoms and molecules generally cannot be designed \cite{plotnik2011, hsu2013, zhen2014, rivera2016}.

There is another approach by which electron states of positive energy can behave as bound states, which is by shaping the electron wavepacket as a superposition of extended states, thus creating a localized state. For example, in 1979 Berry and Bal\'{a}zs proposed a solution of the free-space Schr\"{o}dinger equation that appears to be localized in 1D and whose shape remains time-invariant as with bound electron states \cite{berry1979}. In recent years, this idea has been extensively explored in the optics community, with paraxial and non-paraxial optical beams \cite{siviloglou2007-2, siviloglou2007, kaminer2012}. 

More generally, these kinds of optical beams, such as 
Bessel (\cite{durnin1987i, durnin1987}) and Airy (\cite{siviloglou2007, siviloglou2007-2}) beams, are propagation-invariant wave functions that also accelerate in the absence of external force, and have the intriguing property of self-healing--restoring their original shape after encountering an obstacle \cite{broky2008}. While the above wavepackets are mostly only localized in 2D, the concept of optical beam shaping can also be applied to create ``light-bullets'', which are localized in 3D \cite{chong2010, liang2017}. Importantly, due to the mathematical analogies between optical wavepackets and electron wavepackets, similar concepts are applicable to electrons \cite{bliokh2011, noa2013, grillo2014, harris2015}. 

However, the well-known problem with all of the above time-invariant or propagation-invariant wavepackets is that their probability density is not square-integrable (hence without a physical interpretation as a probability density). To circumvent this, one can truncate the wavepackets at a finite distance. In this case, there is no exact time-invariance or propagation-invariance, but the non-diffracting properties are still present for a finite time and distance, which can be long enough for the desired interaction in experiments \cite{siviloglou2007, mathis2012, noa2013, schley2014}. Another approach that makes the wavepacket square-integrable is by shaping a superposition of a range of energies/frequencies, which can create a localized wavepacket, in exchange for limiting its range/duration for which it is non-diffracting \cite{liang2017}. Therefore, wavepacket shaping in space and time offers localized and long-lived electron wavepackets in the continuum of energy levels, created from superpositions of extended states. 

With the above wavepackets in mind, we now ask: can shaped electron wavepackets be used to mimic optical phenomena, usually accessible only in bound electron systems, such as electron transitions by light emission/absorption? Can shaped electrons access light-matter interactions beyond the ionization threshold that limits bound electron systems? For example, can one create engineered spontaneous emission dynamics (engineered rates, engineered optical spectra, etc.) by shaping a positive energy wavepacket of tailored shape and energy?

\begin{figure}[!t]
    \includegraphics[width=8.6cm]{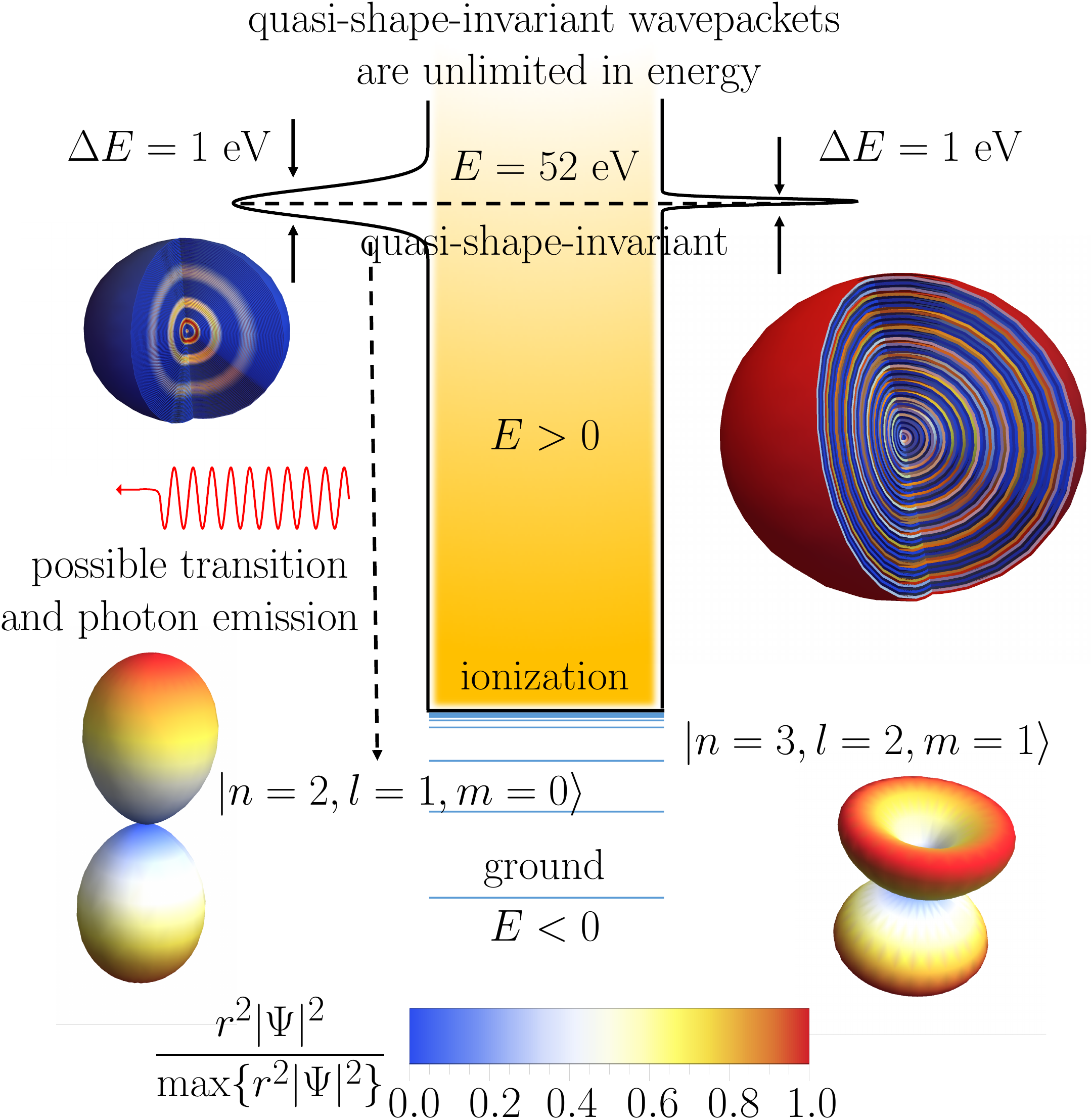}
    \caption{
        \textbf{Shaping of electron states in the continuum of energy levels of the hydrogen atom creates localized and quasi-shape-invariant high-energy wavepackets.} As opposed to the bound states, the electron states in the continuum of the energy spectrum are described by Whittaker functions, thus we call our wavepackets ``Whittaker wavepackets.'' The possibility of decay to bound states enables a photon emission with energy higher than the ionization threshold. We present truncated temperature maps of the typical shapes of the probability densities for both Whittaker wavepackets and bound states. As we decrease the energy spread $\Delta E$, the wavepacket spreads out in space, hence the electron is more likely to be found farther from the origin. 
        \label{main:fig:constructions}
    }
\end{figure}

Here, we propose and study the shaping of shape-invariant wavepackets that simultaneously suppress their own diffraction, while enabling access to a customizable spectrum of transitions, ranging from the visible to the hard-X-ray, via radiative decay to bound states. We develop the analytic tools to maintain the shape-invariance of shaped electron states and the analytic tools to calculate the radiative transitions of such states into bound states. Our methods can be extended to a variety of potentials, including the transitions of free electrons illuminated by general time-dependent fields. Specifically for attractive static potential, we derive the non-diffracting electron wavepackets and study their dynamics. For example, we find that a non-diffracting shape of the wavepacket also affects the behavior of the electron in an attractive potential. We show that the presence of a Coulomb potential changes the physics of the system drastically by allowing propagating wavepackets to decay to bound states through radiative capture. We monitor the ``competition'' between spontaneous emission and diffraction by developing a Fermi Golden Rule (FGR) formalism which quantifies the rate of decay by the excited wavepacket. We find that in all these cases the electron state lifetime is limited by the diffraction dynamics of its wavepacket. Even though, in general, the electron wavepackets we consider are diffracting and subject to spontaneous emission, we find parameters that can suppress those effects significantly. Hence, we concisely refer to the constructed states, which are almost propagation-invariant and time-invariant, as being \textit{quasi-shape-invariant}.

To illustrate the concept of quasi-shape-invariant quantum wavepackets above the ionization threshold, we begin from 
the textbook example of the hydrogen atom, consisting of the Schr\"{o}dinger equation with the Coulomb potential $V(r) = -e^2/4\pi\epsilon_0 r$. The hydrogen atom is one of the most famous problems in quantum mechanics; its electron bound states are well-studied, and analytic expressions for the extended states exist in the literature \cite{landau1981}. We introduce a dimensionless parameter $x$, defined by $x=(2/a_0)r$ with spherical radius $r$ and the Bohr radius $a_0$. Likewise, let $\kappa=ka_0$ be the dimensionless parameter from momentum $k$. 
Figure \ref{main:fig:constructions} presents the shaping of the electron wavepacket that is created from superpositions of eigenstates at positive energies, which are called the Whittaker functions $w_\kappa(x,t)$ and can be found in \cite{abramowitz1972, zwillinger1997}. These eigenstates are of the form \eqb \label{main:eq:solution_schroedinger} w_\kappa(x,0) = \frac{4i\kappa^2 e^{-i\kappa x} }{\pi\csch\left( \pi / 2\kappa \right)}\int_0^1 e^{2i\kappa x s} {\left(\frac{s}{1-s}\right)}^{\frac{i}{2\kappa}} \dd s, \eqe and $w_\kappa(x,t)=w_\kappa(x,0)\exp(-i\omega t \kappa^2),$ where the time evolution frequency is given by $\omega\kappa^2$ with $\omega = 2e^2/a_0\hbar \approx 82 \ \si{\femto \second^{-1}}.$ 

For simplicity, we focus on a spherically symmetric wavepacket with a Gaussian weighting over momentum space $\Psi_{E,\Delta E}(r,t)$, given as follows \eqb \label{main:eq:whittaker_packet} \Psi_{E,\Delta E}(r,t)=\mc{N} \int_{-\infty}^{\infty}  e^{-( \kappa-\mu)^2/2\sigma^2}w_\kappa(x,t)\dd \kappa, \eqe where $\mc{N}$ is a normalization constant, $\mu$ and $\sigma$ are the mean and spread (standard deviation) of momentum. In SI units the energy $E$ is parameterized as  $E(\kappa) = (2e^2/4\pi\epsilon_0 a_0)\kappa x^2$ via the dimensionless $\kappa$. We denote $E=E(\mu)$ and $\Delta E$ for the spread (standard deviation). We call the resulting wavepacket \eqref{main:eq:whittaker_packet}, which is a superposition of the eigenstates \eqref{main:eq:solution_schroedinger}, a ``Whittaker wavepacket.''

\begin{figure}[!t]
    \includegraphics[width=8.6cm]{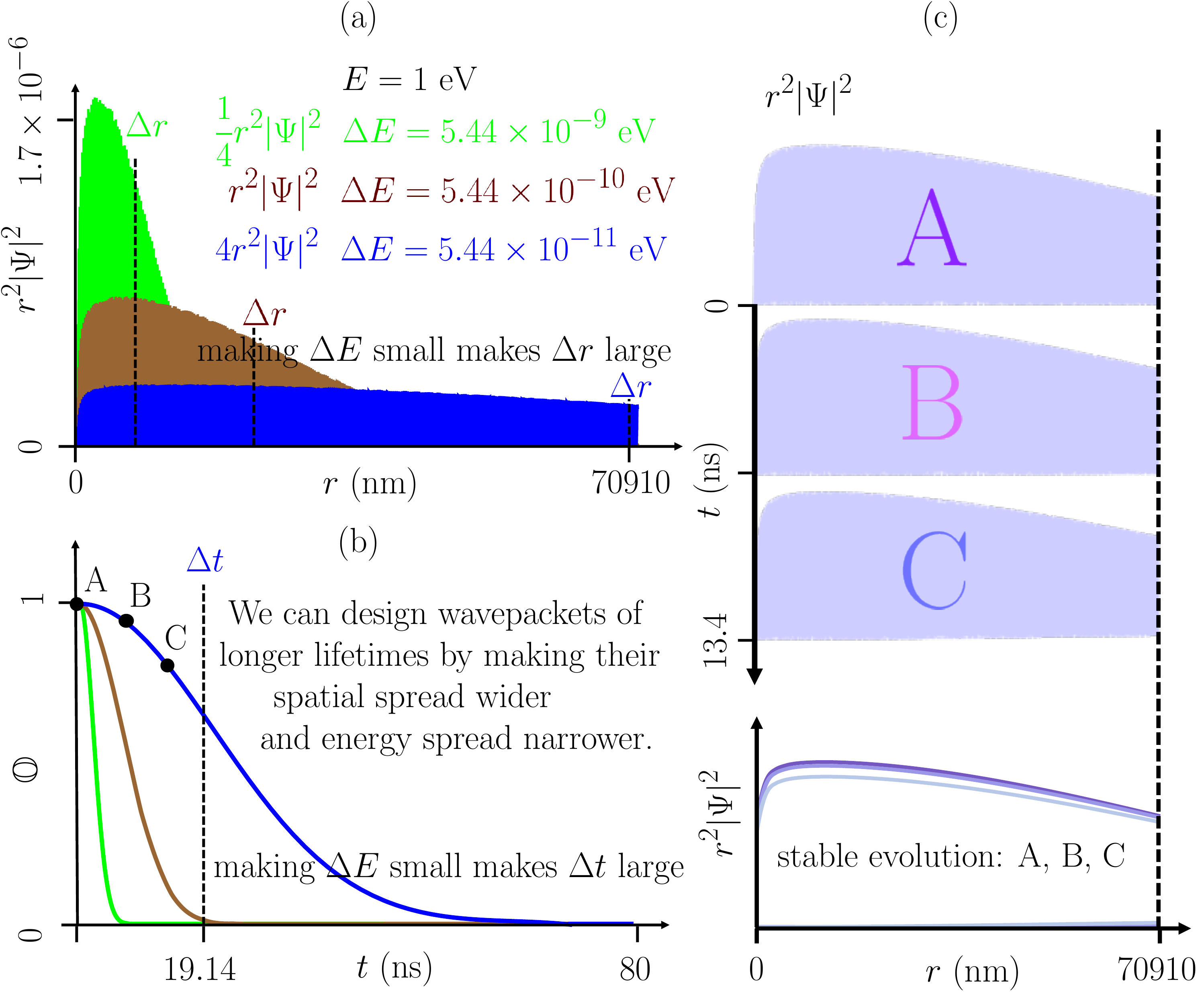}
    \caption{
        \textbf{By shaping the (Whittaker) electron wavepacket we can prolong the lifetime to a desired duration in exchange for increasing the spatial spread.} (a) As we decrease the energy spread $\Delta E$, the probability density spreads out farther in space according to \eqref{main:eq:spread}. Thus there is less probability to find the electron near the origin. (b) The benefit of making $\Delta E$ small, however, is that the lifetime grows according to \eqref{main:eq:lifetime}. Here $\Delta t = 19.14 \ \si{\nano \second}$ and the colors match the Whittaker wavepackets in (a). (c) The envelope of the wavepacket at three points in time A,B,C (also marked in (b)), showing shape-invariant dynamics and slow diffraction over tens of \si{\nano \second}. We view the large $\Delta t$ as the quasi-shape-invariant evolution of the long-lived Whittaker electrons. 
        \label{main:fig:stability}
    }
\end{figure}
Having shaped the wavepackets, we now characterize their spatial and temporal dynamics. One way to do this is to define a spatial spread $\Delta r$, given by a standard deviation  $\Delta r = \sqrt{\mathrm{var}(\mathrm{envelope}(\Psi_{E,\Delta E}(r,0)))}$, and a diffraction lifetime $\Delta t$, also defined as a standard deviation $\Delta t = \sqrt{\mathrm{var}(\mathbb{O}(t))}$ of an overlap function
$\mathbb{O}(t) = |\int_0^{\infty} \Psi^*(r,0)\Psi(r,t)r^2 \dd r|^2.$ Figure \ref{main:fig:stability} (a) reveals that as we decrease the energy spread $\Delta E$, then $\Delta r$ increases. Manifestly, this should hold true as a narrower $\Delta E$ yields a wavepacket approaching an extended state. The analytic form of the the Whittaker wavepacket \eqref{main:eq:whittaker_packet} suggests a functional form, for which we numerically obtain the spatial spread formula 
\eqb \label{main:eq:spread}
\Delta r \approx \frac{2.471 \ a_0}{\sqrt{(\Delta E) / \si{\electronvolt}}}.
\eqe
Furthermore, the overlap $\mathbb{O}(t)$ resembles a Gaussian function as shown in figure \ref{main:fig:stability} (b). In a similar fashion, as we decrease $\Delta E$, the diffraction lifetime increases, because we approach an extended state, which is also stationary. A stationary phase argument yields the diffraction lifetime formula 
\eqb \label{main:eq:lifetime}
\Delta t \approx  \frac{0.136 \ \si{\electronvolt} \cdot \si{\femto \second}}{\sqrt{(\Delta E) E}}.
\eqe
Further discussion on the ansatzes of formulas \eqref{main:eq:spread} and \eqref{main:eq:lifetime} is deferred to the SM sections II and in SM section IV we comment on our numerical fits. In theory we can make $\Delta t$ very large if we have $\Delta E$ as small as possible. Naturally, as this limit is taken, $\Delta t$ increases without bound, which is illustrated in figure \ref{main:fig:stability} (c). In the figure we observe the quasi-shape-invariant nature of the Whittaker wavepackets, that is the flexibility to tune large $\Delta t$ by customizing the parameters $E$ and $\Delta E$.
\begin{figure}[!t]
    \includegraphics[width=8.6cm]{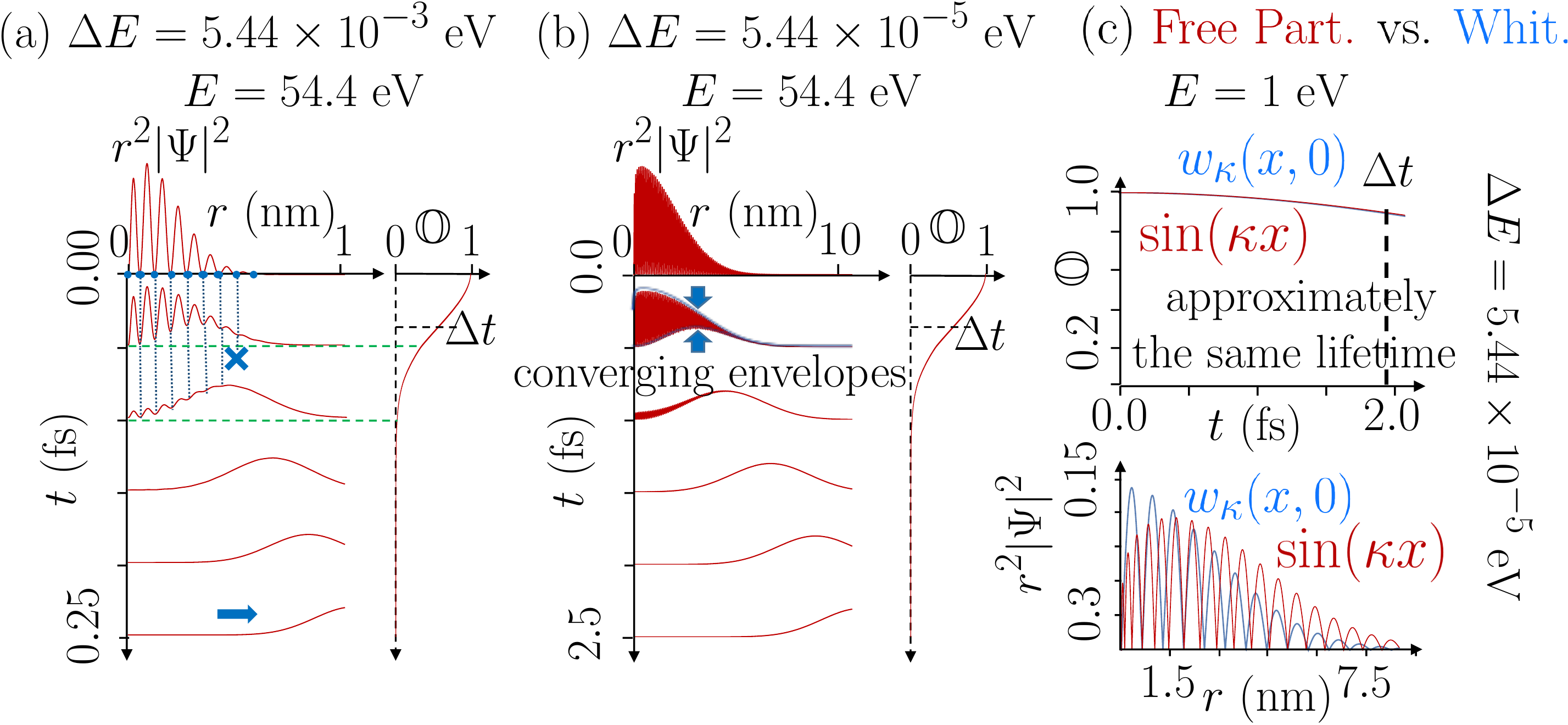}
    \caption{
        \textbf{The dynamics of the Whittaker wavepacket.} (a) The wavepacket begins at $t=0$, with a set of nodes marked by blue dots. The amplitude of the spatial oscillations decreases in time until the nodes vanish and the wavepacket starts to spread out in space. (b) A larger range in space that includes more oscillations (only the envelope is plotted) and has a longer lifetime. The upper and lower wavepacket envelopes converge as time evolves. Notice the similarity between the dynamics of the envelopes in (a) and (b). (c) The geometry of the Whittaker wavepacket is roughly the same as that of the free particle (with Bessel functions as modes, so at large spatial distances they behave approximately as sine funcitons; here $l=0$). This similarity shows that our methods and conclusions also apply for free electron wavepackets (possibly with a time dependent perturbation like a pulsed laser excitation).
        \label{main:fig:dynamics-of-stability}
    }
\end{figure}

We now study the profile of the Whittaker wavepacket and how it evolves in time in a shape-invariant manner. Just as wavepacket shaping is known to extend the lifetime of decaying particles in the Dirac equation \cite{kaminer2015}, our shaped electron can exhibit a longer diffraction lifetime, and as we show below, also a longer radiative lifetime. For example, figure \ref{main:fig:dynamics-of-stability} (a) presents the Whittaker wavepacket's profile at time zero, marking its nodal structure that highlights the shape-invariant properties, in a similar fashion to the shaped packets in related works \cite{berry1979, durnin1987i, durnin1987, siviloglou2007, siviloglou2007-2, broky2008, kaminer2012, kaminer2015}. As time evolves, the nodes vanish sequentially, which happens via ``lifting'' of the wavepacket profile due to continuity. In this way the diffraction eventually destroys the original nodal structure, and having no nodes means that the wavepacket is free to propagate in space like a free wavepacket, as shown in figure \ref{main:fig:dynamics-of-stability} (b). In figure \ref{main:fig:dynamics-of-stability} (b) the upper and lower envelopes converge, which destroys the nodal structure. This process of ``node-lifting'' bottlenecks the diffraction of the wavepacket and enables the quasi-shape-invariant property. A proof of the existence of nodes is given in SM section II, theorem 3. Finally, in figure \ref{main:fig:dynamics-of-stability} (c) we show that free particle wavepackets also have a similar nodal structure at time zero, which yields approximately the same overlap function as that of Whittaker wavepackets. A way to see why this holds true is to look at the 
limiting behavior of the extended mode \eqref{main:eq:solution_schroedinger} for large $r$ ($x \gg 1$)  \cite{zwillinger1997}: $w_\kappa(x,0) \sim \exp(i\kappa x)\exp(i(1/2\kappa)\ln(x))/x.$ The log dependence on $x$ is due to the Coulomb potential. Thus the electron wavepacket does not approximate a free particle wave exactly; however, at large values of $x$ the oscillations from $\exp(-i\kappa x)$ are dominant and form the free particle modes. This means that the Whittaker wavepacket is similar to that of a free particle; hence their time-evolutions should be similar.

\begin{figure}[!t]
    \includegraphics[width=8.6cm]{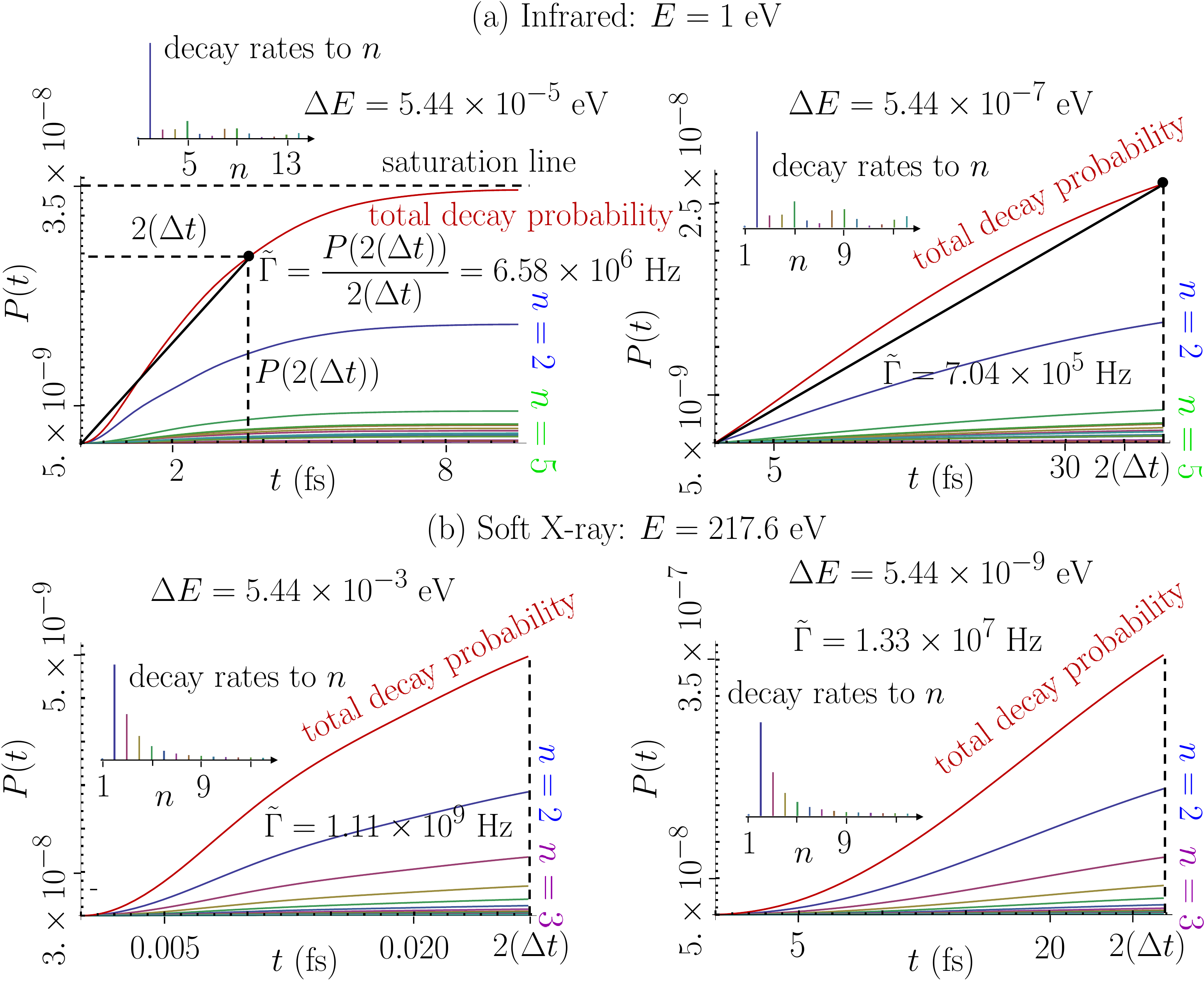}
    \caption{
        \textbf{The radiative decay of the Whittaker wavepackets to bound states.} (a) The decay probability reaches a saturation line. As time evolves, the electron is farther away from the origin, thus the effect of the Coulomb potential is reduced, and the instantaneous rate of decay converges to zero. The profile of decay rates to individual bound states follows an oscillatory pattern, with the highest decay rate to $n=2$. (b) The profiles of the decay rates are monotonously decreasing in comparison to the profiles in (a). Note that (b) may have quantitative corrections due to beyond-dipole corrections arising from the short wavelength of the electromagnetic field since for our FGR calculations we use the dipole approximation. The dynamics and lifetime of the electron wavepacket are dominated by the diffraction and not by the radiative decay, as the diffraction lifetime is generally shorter than the radiative (FGR) lifetime.
        \label{main:fig:spontaneous-emission}
    }
\end{figure}

Importantly, however, the physics of free particles is different from that of hydrogenic electrons. The potential may cause scattering processes between the proton and electron, which in turn could lower the energy of the superposition through radiative decay, and thus reduce its stability. We now quantitatively evaluate the competition between diffraction and radiative decay by quantifying the radiative decay. For that purpose, to get the probability of decay $P_n(t)$ to a bound state $\ket{n}$ at a given time $t$, we develop a FGR formalism (described in the Methods section)  through the $S$-matrix approach \cite{landau1981}. The total decay probability is simply the sum $P(t)=\sum_{n=1}^{\infty}P_n(t)$. Having the total probability, we define a figure of merit \textit{average rate} of decay over the time of two diffraction lifetimes $\Delta t$ via the formula $\tilde{\Gamma}=P\left(2(\Delta t)\right)/2(\Delta t)$, and analyze it in figure \ref{main:fig:spontaneous-emission}. In (a) we observe that the probability saturates within a time of the order of $\Delta t$. Moreover, in both cases, (a) and (b), we find that the diffraction is the dominant factor for the lifetime of the Whittaker wavepackets. In particular, if we hypothesize a lifetime $1/\tilde{\Gamma}$ due to the spontaneous emission then we see that in general $\Delta t \ll 1/\tilde{\Gamma}.$ For example, if $E = 1 \ \si{\electronvolt}$ and $\Delta E = 5.44 \times 10^{-5} \ \si{\electronvolt}$, then $\tilde{\Gamma} = 6.58 \times 10^{6} \ \si{\hertz}$. This yields a hypothetical lifetime of $1 / 6.58 \times 10^{6} \ \si{\hertz} = 152 \ \si{\nano \second}$. As a comparison, for the same case, $\Delta t = 1.91 \ \si{\femto \second}.$ We also observe the pattern that $\tilde{\Gamma}$ grows with an increase in $\Delta E.$ Many other such examples of low-magnitude $\tilde{\Gamma}$ are given in SM section IV. 

Since the radiative lifetime is much longer than the diffraction lifetime, we conclude that formulas \eqref{main:eq:spread} and \eqref{main:eq:lifetime} give a good parametrization of the stability and the large $\Delta t$ vs. large $\Delta r$ trade-off of the Whittaker wavepackets, and show how those properties can be customized.

\begin{table}[!b]
\scriptsize
\begin{tabular}{lr|c|c|c} 
             & & visible light & soft X-ray & hard X-ray \\
fixed & determined & $1$ \si{\electronvolt} & $200$ \si{\electronvolt} &
$10000$ \si{\electronvolt} \\\midrule
\multirow{2}{*}{$\Delta t = 53 \ \si{\atto \second}$} &$\Delta E$ (\si{\electronvolt}) & 6.6 & $\mathbf{3.3 \times 10^{-2}}$ & $\mathbf{6.6 \times 10^{-4}}$ \\ 
& $\Delta r$ (\si{\nano \meter}) & $5.1 \times 10^{-2}$ & $\mathbf{7.2 \times 10^{-1}}$ & \textbf{5.1} \\
 \midrule
 \multirow{2}{*}{$\Delta t = 100 \ \si{\atto \second}$} &$\Delta E$ (\si{\electronvolt}) & 1.9 & $\mathbf{9.3 \times 10^{-3}}$ & $\mathbf{1.9 \times 10^{-4}}$ \\ 
& $\Delta r$ (\si{\nano \meter}) & $\mathbf{9.6 \times 10^{-2}}$ & \textbf{1.4} & \textbf{9.6} \\
 \midrule 
 \midrule
 \multirow{2}{*}{$\Delta r = 143 \ \si{\nano \meter}$} &$\Delta E$ (\si{\electronvolt}) & \multicolumn{3}{c@{\quad}}{$\mathbf{8.4 \times 10^{-7}}$}      \\ 
& $\Delta t$ (\si{\atto \second}) & $\mathbf{1.5 \times 10^5}$ & $\mathbf{1.1 \times 10^{
4}}$ & $\mathbf{1.5 \times 10^3}$ \\
 \midrule
 \multirow{2}{*}{$\Delta r = 10 \ \si{\nano \meter}$} &$\Delta E$ (\si{\electronvolt}) & \multicolumn{3}{c@{\quad}}{$\mathbf{1.7 \times 10^{-4}}$}      \\ 
& $\Delta t$ (\si{\atto \second}) & $\mathbf{1.0 \times 10^{4}}$ & $\mathbf{7.4 \times 10^{2}}$ & $\mathbf{1.0 \times 10^{2}}$ \\\bottomrule
\end{tabular}
\caption{\textbf{Numerical experiments for the spread-lifetime trade-off of Whittaker wavepackets.} 
For a given energy $E$ (visible light, soft X-ray or hard X-ray) a parameter is fixed (either $\Delta t$ or $\Delta r$). In bold are all parameter cases that fit into attophysics and Rydberg atom scales. In the last two cases, there is a single number for $\Delta E$ since $\Delta r$ is independent of $E$.
\label{main:table:attophysics-rydberg-atoms}
}
\end{table}

We now present quantitative examples  of parameters achievable with Whittaker wavepackets, summarized in table \ref{main:table:attophysics-rydberg-atoms}. For a quasi-shape-invariant state to be considered stable enough for optical transitions, its diffraction lifetime $\Delta t$ should be longer than the duration of the optical cycle of the photon emitted from the transition. For states designed to have transitions in the X-ray frequencies, the shortest lifetime we consider in Table I is 53 as (comparable to the shortest X-ray pulse duration measured \cite{li2017} with high harmonic generation \cite{shan2001, ishii2014, keathley2016}). For the largest $\Delta r$, considered in table \ref{main:table:attophysics-rydberg-atoms}, we take 143 \si{\nano \meter} (comparable to the recently observed \cite{palmer2017} Rydberg state at the $n=52$ state). We note that by shaping electrons in transmission electron microscopes, coherent wavepackets over spatial extents of tens of microns have been observed \cite{shiloh2014}, hence one can consider much wider electron wavepackets as well. 
To give specific examples from table \ref{main:table:attophysics-rydberg-atoms}, if we fix $\Delta t = 53 \ \si{\atto \second}$, then according to formulas \eqref{main:eq:spread} and \eqref{main:eq:lifetime} a soft X-ray Whittaker wavepacket ($E = 200 \ \si{\electronvolt}$) has an energy spread $\Delta E = 0.033 \ \si{\electronvolt}$ and a spatial spread $\Delta r = 0.72 \ \si{\nano \meter}$, which all fit within the limits established here. If we consider a hard X-ray Whittaker wavepacket ($E=10 \ \si{\kilo \electronvolt}$) then the energy spread has to shrink to $\Delta E = 66 \ \si{\micro \electronvolt}$; for the same lifetime, however, $\Delta r$ gets to 5.1 \si{\nano \meter}, much smaller than 143 \si{\nano \meter}. We also studied Whittaker wavepackets with lifetimes on the order of nanoseconds, which require spread $\Delta r$ on the order of microns. These widely different regimes of parameters show the strength of our approach coming from the simple trade-off between $\Delta t$ and $\Delta r$.

In summary, we have proposed shaping quasi-shape-invariant wavepackets in the continuum of the hydrogem atom, the concept of which may lift the energy limitations of accessible light-matter interactions above the  ionization levels for atoms and molecules. By characterizing the dynamics of the wavepackets, we describe the large lifetime vs. large spatial spread trade-off, intrinsic to the quasi-shape-invariant electron wavepackets. These wavepackets exhibit unique phenomena in their decay dynamics, such as a stark change from radiative decay similar to bound states at short times, to saturation at long times (once the electron's probability density spreads away). It will be interesting to study how more complicated shapes (e.g. Airy) of the wavepackets alter the spontaneous radiative transition rates. Our methods, and specifically the time-dependent FGR formalism, can be applied to a variety of other systems. For instance, a good candidate is the system of shaped free electrons that interact with time-dependent potentials \cite{barwick2009, armin2015, echternkamp2016, ryabov2016}. The shaping of quasi-shape-invariant states can bring atomic physics phenomena to new energy ranges such as soft and hard X-rays. Utilizing these phenomena might introduce new quantum light sources and other applications to a diversity of physical systems: including various Dirac-particles and free electrons under strong fields, as well as other wave systems that often describe analogous physics, such as water waves, acoustic waves on membranes, and electromagnetic waves.

We would like to thank Pamela Siska for useful comments on the paper and Thomas Beck for useful discussions. I. K. was supported by the Marie Curie Grant 328853-MC-BSiCS. N. R. was supported by the Department of Energy Fellowship DE-FG02-97ER25308. Research was supported as part of the Army Research Office through the Institute for Soldier Nanotechnologies under contract no. W911NF-13-D-0001 (photon management for developing
nuclear-TPV and fuel-TPV mm-scale-systems). Also supported as part of the S3TEC, an Energy Frontier Research Center funded by the US Department of Energy under grant no. DE-SC0001299 (for fundamental photon
transport related to solar TPVs and solar-TEs).

\section*{Methods}
The nature of our analysis (both analytic and numerical) is universal, and can be generalized to calculate the transitions of superposition states under any potential, including time-dependent potentials. Specifically, the derivation of the Whittaker modes, the shaping of the wavepackets, the analytic monitoring of the evolution, and the shaping that enables the quasi-shape-invariant properties of the superpositions (see further details in SM sections I 
and II) can also be reproduced for electrons in the vicinity of other potentials. However, most cases would require finding the extended states numerically or working with arbitrary wavepackets, without finding the eigenstates at all. Furthermore, to obtain an analytic expression for the probability of transition decay from wavepackets in the continuum to bound states, we use FGR calculations that, in theory, can be applied for any interaction potential. 

Our FGR calculations are based on a QED formalism and the $S$-matrix approach \cite{landau1981}, for which the infinitesimal probability of transition from the initial state $\ket{i}$ to the final state $\ket{f}$, through the emission of a photon with momentum $\vb{k}$ and polarization $\lambda$, is given by $\dd P_{\mrm{fi}}(\vb{k},\lambda) = (V\dd^3 \vb{k})/(2\pi)^3|S_{\mrm{fi}}(\vb{k},\lambda)|^2$ for a finite volume $V;$ here $S_{\mrm{fi}}=\bra{\mrm{f}} T e^{-\frac{i}{\hbar} \int_0^t \hat{H}_{\text{int}} \dd t'} \ket{\mrm{i}}$ is the matrix element of the time-ordered unitary evolution operator of the EM interaction Hamiltonian $\hat{H}_{\text{int}}[\psi] = -(i\hbar e)/m_e \int \dd^3 \vb{x} \psi^* \vu{A}(\vb{x},t) \cdot \grad \psi$, which for the wavepacket $\psi$ is defined via the vector potential $\vu{A}$ and the mass of the electron $m_e$. Up to first order, we derive the universal expression $S_\mrm{fi}(\vb{k},\lambda)=-e/m_e\sqrt{\hbar/2\epsilon_0 \omega_{k} V} \int_0^t \dd t' e^{i\omega_{k}t'} \vu{\varepsilon}_{\vb{k}\lambda} \cdot
\int \dd^3 \vb{x} \bra{f}(t')\ket{\vb{x}}  e^{-i\vb{k}\cdot\vb{x}} \grad( \bra{\vb{x}}\ket{i}(t'))$ for the photon's frequency $\omega_k$ and its polarized direction $\vu{\varepsilon}_{\vb{k}\lambda}.$ In this paper we use $\ket{i}$ as the Whittaker wavepacket and $\ket{f}$ as the bound states, but wavepackets for any physical system can be used, either by using their analytic expressions as done here, or going through a fully numerical approach.

\bibliography{HA_bibliography.bib}

\pagebreak
\widetext
\begin{center}
\textbf{\large Supplemental Materials: Shaping Long-lived Electron Wavepackets for Customizable Optical Spectra}
\end{center}
\setcounter{equation}{0}
\setcounter{figure}{0}
\setcounter{table}{0}
\setcounter{page}{1}
\makeatletter
\renewcommand{\theequation}{S\arabic{equation}}
\renewcommand{\thefigure}{S\arabic{figure}}
\renewcommand{\bibnumfmt}[1]{[S#1]}
\renewcommand{\citenumfont}[1]{S#1}
The Supplemental Material is organized as follows. In section I 
we discuss the solutions of the Schr\"{o}dinger equation that yield the Whittaker wavepackets and show that they are physical wavepackets. In section \ref{sec:s:dynamics} we describe the dynamics of the Whittaker wavepackets and their underlying mathematical properties. 
In section \ref{sec:s:fgr} we discuss the Fermi Golden Rule (FGR) formalism and calculations. In section \ref{sec:s:num} we explain our numerical experiments.

\section{Derivation and Properties of Whittaker wavepackets} \label{sec:s:whit}
In this section we will motivate the origin of the Whittaker constructions. There are two steps in our construction: to solve the Schr\"{o}dinger equation for the extended eigenstates and then to construct the wavepackets from superpositions of these states.

We look for spherically symmetric extended states, i.e. $l=0$ and from separation of variables, the angular part is the spherical harmonic $Y_{00}(\theta,\phi) = (1/2)\pi^{-1/2}$. For the radial part $f_{El}$, the Schr\"{o}dinger equation takes the following form 
\eqb \label{eq:schrodinger}
 -\frac{\hbar^2}{2m}\pdv[2]{}{r}(rf_{El})+\left( V(r) + \frac{\hbar^2 l(l+1)}{2mr^2}\right)\cdot (rf_{El})=E\cdot (rf_{El}).
\eqe
Equation \eqref{eq:schrodinger} yields both bound eigenstate solutions (for $E<0$) and extended eigenstate solutions (for $E>0$). Given the mass of the electron $m_e$ and the electric constant $\epsilon_0$ we use the Coulomb potential $V(r) = -e^2/4\pi\epsilon_0 r$. It is useful to write $u(r)=rf_{Elm}$ and define the dimensionless parameter $x=r/(a_0/2)$, where $a_0$ is the Bohr radius. Likewise, let $\kappa=ka_0$ be the dimensionless parameter from momentum $k$. Substituting the parameters into equation \eqref{eq:schrodinger} we obtain the following differential equation for $u$
\eqb \label{eq:unitfree}
\left(\pdv[2]{x}+\frac{1}{x}+\kappa^2 \right)u = 0.
\eqe
The crux of our analysis is understanding the solutions to equation \eqref{eq:unitfree}. Luckily, we can reduce it to a known differential equation after some algebraic manipulations. Namely, let $W=u$, $z'=2ikx$, $k'=-i/2\kappa$ and $m'=1/2$. Then \eqref{eq:unitfree} is equivalent to the following 
$$
W''+\left(-\frac{1}{4} + \frac{k'}{z'} + \frac{\frac{1}{4}-m'^2}{z'^2}\right)W=0, 
$$
which is known as a version of the \textit{Whittaker differential equation} in the literature (\cite{abramowitz1972, zwillinger1997}).
A basis for the solutions is the following expression
\eqb \label{eq:s:solution}
u_\kappa(x)=
\frac{2i\kappa xe^{-i\kappa x} }{\Gamma\left(1-\frac{i}{2\kappa}\right)\Gamma \left(1+\frac{i}{2\kappa}\right)}\int_0^1 e^{2i\kappa x s} s^{\frac{i}{2\kappa}}(1-s)^{-\frac{i}{2\kappa}} \dd s,
\eqe
and its conjugate $\ovrl{u}_\kappa(x)$. We need to divide by $x$ to obtain $f_{Elm}$, which yields 
the equation in the main text for the Whittaker modes
\eqb \label{main:eq:solution_schroedinger} w_\kappa(x,0) = \frac{4i\kappa^2 e^{-i\kappa x} }{\pi\csch\left( \pi / 2\kappa \right)}\int_0^1 e^{2i\kappa x s} {\left(\frac{s}{1-s}\right)}^{\frac{i}{2\kappa}} \dd s, \eqe
where $w_k(x,0)\equiv f_{Elm}(x)$, as desired and $w_\kappa(x,t)=w_\kappa(x,0)\exp(-i\omega t \kappa^2)$ where the time evolution frequency is given by $\omega = 2e^2/a_0\hbar \approx 82 \ \si{\femto \second^{-1}}.$
From \eqref{main:eq:solution_schroedinger} we obtained the Whittaker wavepackets in \eqb \label{main:eq:whittaker_packet} \Psi_{E,\Delta E}(\vb{r},t)=\mc{N} \int_{-\infty}^{\infty}  e^{-( \kappa-\mu)^2/2\sigma^2}w_\kappa(x,t)\dd \kappa, \eqe where $\mc{N}$ is a normalization constant, $\mu$ and $\sigma$ are the mean and spread (standard deviation) of momentum. In SI units the energy $E$ is parameterized as  $E(\kappa) = (2e^2/4\pi\epsilon_0 a_0)\kappa^2$ via the dimensionless $\kappa$. We denote $E=E(\mu)$ and $\Delta E$ for the spread (standard deviation).
\qed

We claim that the wave function \eqref{main:eq:whittaker_packet} is square-integrable and therefore its probability density can have a physical meaning. We prove this via the following result. 

\begin{thm}
The Whittaker wavepacket $\Psi_{E,\Delta E}$, defined in \eqref{main:eq:whittaker_packet}, is square-integrable, hence its probability density has a physical meaning. 
\end{thm}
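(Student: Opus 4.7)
The plan is to prove $\Psi_{E,\Delta E}(\cdot,t) \in L^2(\mathbb{R}^3)$ by controlling the integrand on a compact ball and on its complement separately. The underlying physical point is that although each individual continuum mode $w_\kappa$ is not square-integrable---the quantity $|w_\kappa(x,0)|^2 r^2$ tends to a nonzero constant as $r\to\infty$---the Gaussian superposition in $\kappa$ should produce destructive interference away from the wavepacket center and hence rapid decay in $r$, making the $L^2$ norm finite. The time dependence $e^{-i\omega t \kappa^2}$ enters only as a unimodular factor in the integrand, smooth in $\kappa$, so it suffices to establish the bound at a generic time $t$.

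For the near region $r\in[0,R]$, I would bound $|w_\kappa(x,0)|$ uniformly in $\kappa$ on the effective support of the Gaussian weight using its integral representation \eqref{main:eq:solution_schroedinger}; this gives $|\Psi_{E,\Delta E}(r,t)|\leq M(R)$ pointwise on the ball, so the contribution to $\|\Psi\|^2$ is at most $(4\pi/3) M(R)^2 R^3 < \infty$. For the far region $r\geq R$, I would substitute the large-$x$ asymptotic quoted in the main text, $w_\kappa(x,0)\sim e^{i\kappa x + i(\ln x)/2\kappa}/x$, so that after cancellation with the $r^2$ radial Jacobian the tail estimate reduces to showing that
\[
F(r,t) \equiv \int_{-\infty}^{\infty} e^{-(\kappa-\mu)^2/2\sigma^2}\, e^{i\kappa x - i\omega t \kappa^2 + i(\ln x)/2\kappa}\, \dd\kappa
\]
decays rapidly in $x$. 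Writing $e^{i\kappa x} = (ix)^{-1}\partial_\kappa e^{i\kappa x}$ and integrating by parts $N$ times throws the $\kappa$-derivatives onto the Gaussian and the slowly varying phases, which should yield $|F(r,t)|\leq C_N(t)\, x^{-N}$ for every $N$; integrating $|F|^2$ against $\dd r$ then gives a finite tail contribution.

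The main obstacle will be that the Coulomb phase $e^{i(\ln x)/2\kappa}$ is singular at $\kappa=0$ and its $\kappa$-derivatives grow like powers of $\ln x$, which threatens the integration-by-parts bound at large $x$. I would handle this by splitting the $\kappa$-integral into the region $|\kappa-\mu|\leq \mu/2$, where the Coulomb phase is smooth and its derivatives are controlled by a polynomial in $\ln x$ that is easily beaten by any power of $1/x$, and its complement, where the Gaussian weight $e^{-(\kappa-\mu)^2/2\sigma^2}$ provides exponential smallness that dominates all $x$-dependence. A cleaner alternative would be to appeal to the standard continuum-orthonormality of hydrogenic extended states, $\int_0^\infty w_\kappa^*(x,0)\, w_{\kappa'}(x,0)\, r^2\, \dd r \propto \delta(\kappa-\kappa')$, which would reduce $\|\Psi\|^2$ immediately to a one-dimensional Gaussian integral over $\kappa$ and fix $\mathcal{N}$ simultaneously, at the cost of invoking a distributional identity whose rigorous justification is itself nontrivial.
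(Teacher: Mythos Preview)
Your approach is essentially the same as the paper's: split into a near region (finite by boundedness) and a far region where you substitute the large-$x$ asymptotic $w_\kappa(x,0)\sim e^{i\kappa x + i(\ln x)/2\kappa}/x$ and integrate by parts in $\kappa$ against the oscillatory factor $e^{i\kappa x}$ to extract decay in $x$. The paper performs only two integrations by parts, obtaining $x|\Psi|\lesssim x^{-1}+x^{-1}\ln x + x^{-2}(\ln x)^2$, and is less careful than you about the $\kappa\to 0$ singularity of the Coulomb phase and the time factor, but the core mechanism is identical; your suggested alternative via continuum orthonormality is not used in the paper.
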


Let $\Psi \equiv \Psi_{E,\Delta E}$ for convenience. It suffices to show that $\int_{0}^{\infty}| \Psi|^2 x^2 \dd x$ is finite. We care only for the large $x$ behavior since the integral for small $x$ gives a finite contribution. Ignoring constant factors, for large $x$ we can use the approximation 
$$
x\Psi \sim \int_0^{3\sigma} \dd \kappa \exp\left(-\frac{( \kappa-\mu)^2}{2\sigma^2} \right)\exp(-i\kappa x)\exp(i(1/2\kappa)\ln(x)).
$$
Using the fact that $(e^{-i\kappa x})' = -ixe^{-i\kappa x}$ we can apply integration by parts to multiply the integral by a factor of $1/x$ and get additional contributions from constants. Ignoring the constants, we apply integration by parts again to obtain another factor of $1/x$. In conclusion, we obtain the following 
\eqb \label{eq:s:decay}
x|\Psi| < \frac{\text{const}}{x} + \frac{\text{const}\ln{x}}{x}+\frac{\text{const}(\ln{x})^2}{x^2}. 
\eqe
Since the functions in $x$ decay sufficiently quickly, from equation \eqref{eq:s:decay} we get that $\int_0^{\infty}| \Psi |^2 x^2 \dd x$ is bounded from above, as desired. \qed 

\section{Dynamics of the Whittaker wavepackets} \label{sec:s:dynamics}
In this section we introduce two important arguments: we describe the nature of the equations for the spatial spread $\Delta r$ and the diffraction lifetime $\Delta t$; we argue for a property that signifies the quasi-shape-invariant stability of the wavepackets \eqref{main:eq:whittaker_packet}.

First, we can develop simple analytic tools to get an understanding of the evolution in time of the Whittaker superpositions. Then we can conjecture a functional form for $\Delta r$ and $\Delta t$ that would yield formulas \eqb \label{main:eq:spread}
    \Delta r \approx \frac{2.471 \ a_0}{\sqrt{(\Delta E) / \si{\electronvolt}}}.
    \eqe
and 
\eqb \label{main:eq:lifetime}
    \Delta t \approx  \frac{0.136 \ \si{\electronvolt} \cdot \si{\femto \second}}{\sqrt{(\Delta E) E}}.
\eqe
 after fitting to our simulations.

\begin{thm} \label{suppl:thm:anzats}
The Whittaker wavepacket \eqref{main:eq:whittaker_packet} can be approximated as a Gaussian function in position space with mean $\mu_x$ and standard deviation $\sigma_x$ satisfying 
\eqb \label{eq:s:analytics}
\mu_x(t) = 2\mu \omega t  \ \ \text{ and } \ \ \sigma^2_x(t) = \frac{1}{\sigma^2} + 4\sigma^2 \omega^2 t^2.
\eqe
Moreover, a natural functional anzats for the spatial spread $\Delta r$ \eqref{main:eq:spread} and the diffraction lifetime $\Delta t$ \eqref{main:eq:lifetime} is given as follows 
\eqb \label{eq:s:anzats}
\Delta r = \frac{\mrm{const}}{\sqrt{\Delta E}} \ \ \text{ and } \ \ \Delta t = \frac{\mrm{const}}{\sqrt{(\Delta E)E}}.
\eqe
\end{thm}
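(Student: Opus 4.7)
The plan is to use the large-$x$ asymptotic form of the Whittaker mode to reduce the wavepacket to a Gaussian integral in $\kappa$, and then motivate the functional ansatzes for $\Delta r$ and $\Delta t$ from the resulting expressions. First I would write
\[ w_\kappa(x,t)\sim \frac{1}{x}\exp\!\left(i\kappa x + \frac{i}{2\kappa}\ln x - i\omega t\kappa^2\right), \]
substitute into \eqref{main:eq:whittaker_packet}, and drop the log-phase $(2\kappa)^{-1}\ln x$ as a subleading correction (its $\kappa$-derivative near the peak of the Gaussian weight is $O(\mu^{-2}\ln x)$, which is small for the high-energy regime of interest). This turns $\Psi_{E,\Delta E}$ into a Gaussian integral with a linear phase in $\kappa$ coming from the plane-wave asymptotics and a quadratic phase in $\kappa$ coming from the dispersion.

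Next I would complete the square in the exponent, writing
\[ -\frac{(\kappa-\mu)^2}{2\sigma^2}+i\kappa x-i\omega t\kappa^2 \;=\; -A\!\left(\kappa-\frac{B}{2A}\right)^{\!2} + \frac{B^2}{4A} - \frac{\mu^2}{2\sigma^2}, \]
with $A=(2\sigma^2)^{-1}+i\omega t$ and $B=\mu/\sigma^2+ix$, and performing the standard complex Gaussian integral. Taking the modulus and simplifying the real part of $B^2/(4A)-\mu^2/(2\sigma^2)$ yields $|\Psi|\propto\exp\!\bigl(-(x-\mu_x)^2/(2\sigma_x^2)\bigr)$ with $\mu_x=2\mu\omega t$, the expected group-velocity drift, and $\sigma_x^2=1/\sigma^2+4\sigma^2\omega^2 t^2$, where $\sigma_x$ is defined by the amplitude envelope $|\Psi|$ rather than the probability density $|\Psi|^2$ (the difference is an overall factor of two in the variance).

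For the ansatzes: at $t=0$ one has $\sigma_x(0)=1/\sigma$, so $\Delta r\propto 1/\sigma$, and the time at which the spreading term $4\sigma^2\omega^2 t^2$ becomes comparable to $1/\sigma^2$ is $\Delta t\propto 1/(\sigma^2\omega)$. Linearizing $E(\kappa)=c\kappa^2$ (with $c=2e^2/(4\pi\epsilon_0 a_0)$) to relate $\sigma$ to the physical spread $\Delta E$, and using the group-velocity scaling $v_g\propto\sqrt{E}$ so that $\Delta t\sim\Delta r/v_g$, the simplest one-parameter functional forms consistent with these scalings are $\Delta r\propto(\Delta E)^{-1/2}$ and $\Delta t\propto(E\,\Delta E)^{-1/2}$. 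The numerical prefactors $2.471$ and $0.136$ cannot be obtained from the asymptotic analysis alone and are pinned down by the fits described in SM section IV.

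The main obstacle is twofold. First, the large-$x$ asymptotic is strictly valid only for $x\gg 1$, yet at $t=0$ the wavepacket envelope is centered near $x=0$; one must argue that for the wide envelopes $\sigma_x(0)\gg 1$ of interest the bulk of the probability already lies in the asymptotic region, or equivalently that the small-$x$ contribution only perturbs the Gaussian shape at sub-leading order. Second, the dropped log-phase $(2\kappa)^{-1}\ln x$ shifts the stationary-phase relation by a logarithmic correction to $\mu_x$; Taylor expanding $(2\kappa)^{-1}$ about $\kappa=\mu$ absorbs the linear piece into a redefinition $x\to x-(\ln x)/(2\mu^2)$, with residual corrections parametrically small in $\sigma/\mu$ for the narrow, high-energy Gaussians considered throughout the paper.
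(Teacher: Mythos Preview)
Your core reduction is exactly what the paper does: replace $w_\kappa$ by its large-$x$ plane-wave asymptotic, strip the $1/x$ prefactor, perform the resulting Gaussian integral in $\kappa$ by completing the square, and read off $\mu_x(t)=2\mu\omega t$ and $\sigma_x^2(t)=\sigma^{-2}+4\sigma^2\omega^2 t^2$. Your explicit handling of the log-phase $(2\kappa)^{-1}\ln x$ is in fact more careful than the paper's, which simply drops it without comment.

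Two issues arise in the ansatz step. First, your spreading-time estimate $\Delta t\propto 1/(\sigma^2\omega)$ is not the quantity the paper calls $\Delta t$. The diffraction lifetime is defined through the overlap $\mathbb{O}(t)$, which collapses once the center has drifted by one width, i.e.\ when $2\mu\omega\,\Delta t\sim 1/\sigma$, giving $\Delta t\propto 1/(\mu\sigma)$. That is the paper's ``stationary phase argument,'' and it coincides with your $\Delta t\sim\Delta r/v_g$; the spreading time you quote first is parametrically longer for $\sigma\ll\mu$ and should be discarded rather than offered alongside.

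Second, and more importantly, your linearization $\Delta E\approx 2c\mu\sigma$ does \emph{not} produce the stated power laws. Carrying it through gives $\sigma\propto\Delta E/\sqrt{E}$, hence $\Delta r\propto 1/\sigma\propto\sqrt{E}/\Delta E$ and $\Delta t\propto 1/(\mu\sigma)\propto 1/\Delta E$, both in conflict with the claimed $(\Delta E)^{-1/2}$ and $(E\,\Delta E)^{-1/2}$. The paper does not linearize; its ``conversion to units of energy'' is the direct substitution $\mu\mapsto\sqrt{E}$ and $\sigma\mapsto\sqrt{\Delta E}$ suggested by $E(\kappa)\propto\kappa^2$, which yields the quoted exponents immediately. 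The ansatzes are thus fitting forms motivated by that substitution and then validated numerically in Section~\ref{sec:s:num}, not consequences of linear error propagation. Your final sentence asserts consistency with the scalings, but the algebra of your own linearization contradicts it; either adopt the paper's substitution explicitly or flag the ansatzes as phenomenological.
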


\begin{proof}
After a large $x$ approximation, the integrand in \eqref{main:eq:whittaker_packet} looks as follows: 
$$
\exp\left(-\frac{( \kappa-\mu)^2}{2\sigma^2} \right) \frac{u_\kappa(x)}{x} \exp(-i\omega t \kappa^2).
$$
We can pull the $1/x$ term out of the integration since it does not affect the spread nor the lifetime. 
Suppose we make a large $x$ approximation and thus replace the exact Whittaker solution $u_\kappa(x)$ in $\Psi_{E,\Delta E}$ with its plane wave approximation $\exp(-i\kappa t)$ . Without loss of generality we are left to consider the following integral
\eqb \label{eq:s:simplification}
\Psi_{\text{approx.}} =
\int_{-\infty}^{\infty} \dd \kappa \exp\left(-\frac{( \kappa-\mu)^2}{2\sigma^2} \right) e^{-i\kappa x} e^{-i \omega t \kappa^2}.
\eqe
Now, having the approximation \eqref{eq:s:simplification} we can produce an analytic estimate for $\Delta x$ (and thus for $\Delta r$). Equation \eqref{eq:s:simplification} combines the exponential with the $\kappa^2$ dependence to obtain $\exp(-a\kappa^2  + b \mu  + \text{const})$, where for convenience we denote $a = \left(1/2\sigma^2+i\omega t\right)$ and
$b = \kappa / \sigma^2$.
Performing a Fourier transform from momentum space to position space and ignoring the constants in the exponential (as they only change the amplitude and not the mean and the spread of the Gaussian), we obtain the following Gaussian in $x$: 
$$
\Psi_{\text{approx.}} \sim \exp\left( \frac{(b-ix)^2}{4a} \right) \sim \exp\left(
-\frac{(x-2\mu \omega t )^2}{2\left( \frac{1}{\sigma^2} + 4\sigma^2 \omega^2 t^2\right)}
\right).
$$
In the last line we substituted for $a$, $b$ and $c$ and ignored the phase factors and the constants that do not affect the evolution of the probability density $x^2|\Psi|^2$. From the last equation we extract the $\mu_x$ and $\sigma_x$ of the Gaussian in position space to obtain the statement \eqref{eq:s:analytics}. To conclude the theorem, observe from \eqref{eq:s:simplification} that $\Delta x = \sigma_x(0) = 1/\sigma.$ Converting to $r$ from the unitless $x$, and a conversion to units of energy yields the anzats. Furthermore, a stationary phase argument yields $\Delta x \sim 2\mu \omega \Delta t$. Hence, $\Delta t \propto 1/\mu \sigma$, which concludes the proof. 
\end{proof}
\noindent
Theorem \ref{suppl:thm:anzats} is the theoretical foundation for obtaining equations \eqref{main:eq:spread} and \eqref{main:eq:lifetime}, which govern the dynamics of the Whittaker wavepackets. In section \ref{sec:s:num} we describe the procedure of fitting the constants in the anzats \eqref{eq:s:anzats}.

Now, we claim that the mathematical properties of the Whittaker wavepackets of angular momentum zero can be used to explain their quasi-shape-invariance. 
As we show below, the Whittaker wave functions \eqref{main:eq:solution_schroedinger} are purely imaginary. This property results in the Whittaker wavepacket at time zero \eqref{main:eq:whittaker_packet} $\Psi(r,0)$ also being purely imaginary. Thus, the nodes of the probability density $r^2\Psi(r,0)^2$ are the double roots of the zeros of the real wave function $\Im \Psi(r,0)$. Those zeros of the imaginary part of $\Psi(r,0)$ are closely related to the zeros of the Whittaker functions, \eqref{main:eq:solution_schroedinger} described in \cite{dunster1988, gabutti2001}. In \ref{suppl:fig:interference-explained} we show that the zeros of the extended states are closely spaced near the origin as we vary $\kappa$ slowly. Therefore, in between the regions of vanishing (circled on figure \ref{suppl:fig:interference-explained}), $\Im \Psi(r,0)$ would take alternating signs. Hence, by the Intermediate Value Theorem, for small distance $r$, the wavepacket is forced to have a node near the nodes of each of the extended states.

\begin{figure*}[htb]
    \includegraphics[scale=0.5]{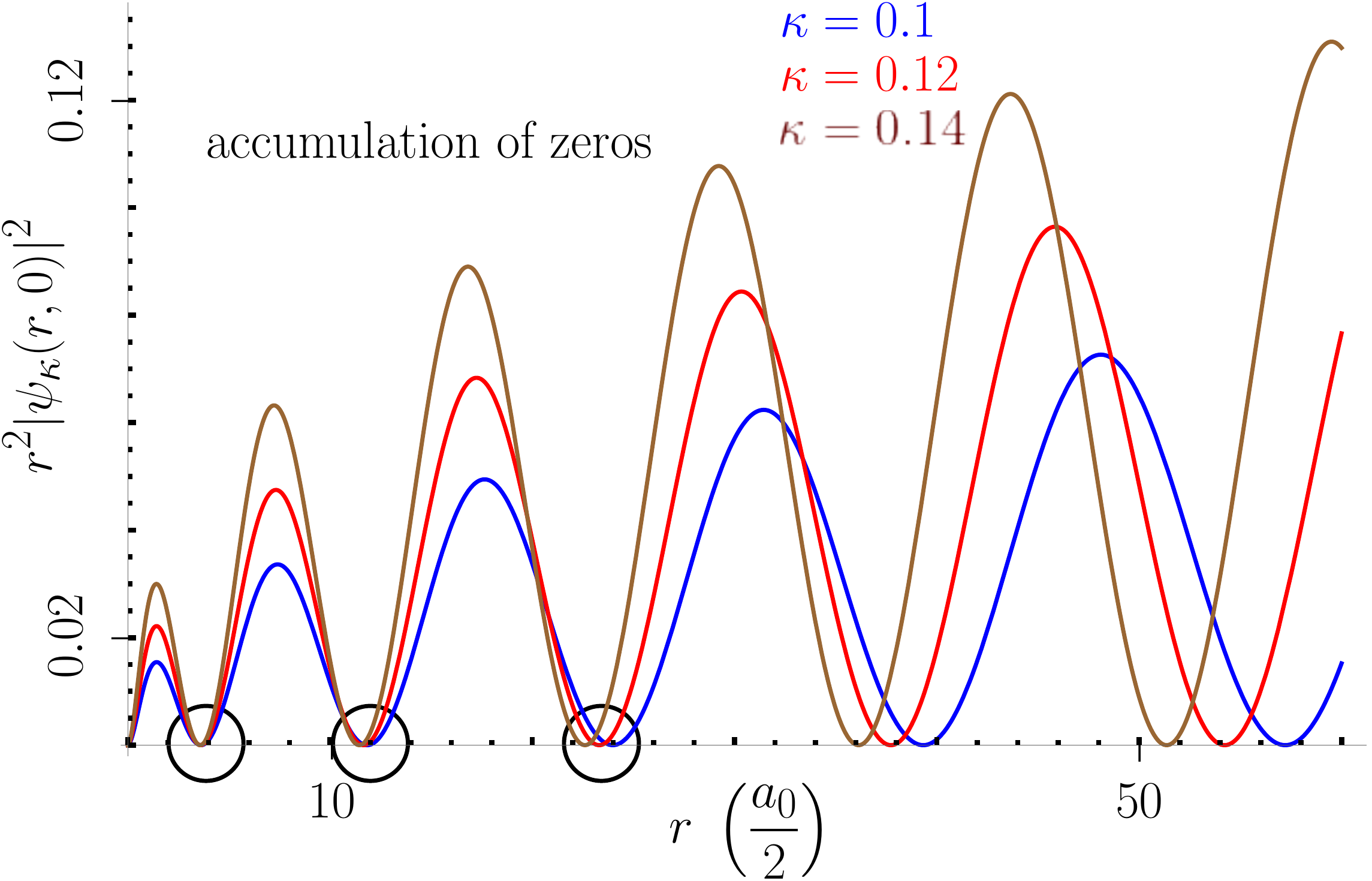}
    \caption{
        \textbf{The zeros of the Whittaker modes generate the nodes of the Whittaker wavepackets at time $t=0$.}  The zeros of these functions are close to each other for small $r$ and deviate from each other as $r$ becomes larger.  
        \label{suppl:fig:interference-explained}
    }
\end{figure*}

Therefore, it suffices to show that the functions \eqref{eq:s:solution} are purely imaginary. It is convenient to plug the exponential $e^{-i\kappa x}$ from the numerator in \eqref{eq:s:solution} into the integral and consider the following resulting expression
\eqb \label{eq:int1}
\int_0^1 e^{i\kappa x(2s-1)} s^{\frac{i}{2\kappa}}(1-s)^{-\frac{i}{2\kappa}} \dd s .
\eqe
The trick we present here is a change of variables of the following form
\eqb \label{eq:trick1}
s = 1-s'.
\eqe
Going through the algebra and relabeling $s'$ back to $s$, we obtain that expression \eqref{eq:int1} is equivalent to the following 
\eqb \label{eq:int2} 
\int_0^1 e^{i\kappa x(1-2s)} s^{-\frac{i}{2\kappa}}(1-s)^{\frac{i}{2\kappa}} \dd s .
\eqe
The symmetry of \eqref{eq:int1} is key to the proof that follows. 
\begin{thm}
The functions $w_\kappa(x,0)$ are purely imaginary.
\end{thm}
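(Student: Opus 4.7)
The plan is to exploit the symmetry already set up in the excerpt: the factor of $i$ sitting explicitly in front of $w_\kappa(x,0)$ together with the real prefactor $4\kappa^2/(\pi\csch(\pi/2\kappa))$ means that it is enough to show that the integral
\[
J(\kappa,x) \;=\; e^{-i\kappa x}\int_0^1 e^{2i\kappa x s}\left(\frac{s}{1-s}\right)^{i/2\kappa} ds
\]
is a real number for every real $\kappa$ and $x$. Absorbing the $e^{-i\kappa x}$ inside and rewriting $(s/(1-s))^{i/2\kappa}$ as $s^{i/2\kappa}(1-s)^{-i/2\kappa}$ puts $J$ in exactly the form of the integral $I_1$ in \eqref{eq:int1}.

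The next step is to observe that complex conjugation is almost a symmetry of $I_1$: since $s$ and $1-s$ are strictly positive real numbers on the open interval, $\overline{s^{i/2\kappa}(1-s)^{-i/2\kappa}} = s^{-i/2\kappa}(1-s)^{i/2\kappa}$, and $\overline{e^{i\kappa x(2s-1)}} = e^{-i\kappa x(2s-1)} = e^{i\kappa x(1-2s)}$. Reading off these identities shows that the integrand of $I_2$ in \eqref{eq:int2} is precisely the pointwise complex conjugate of the integrand of $I_1$, so $I_2 = \overline{I_1}$.

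Now I would invoke the change of variables $s\mapsto 1-s$ displayed in \eqref{eq:trick1}. It sends the interval $[0,1]$ to itself, carries the factor $s^{i/2\kappa}(1-s)^{-i/2\kappa}$ to $(1-s)^{i/2\kappa}s^{-i/2\kappa}$, and flips the phase $e^{i\kappa x(2s-1)}$ to $e^{i\kappa x(1-2s)}$. The Jacobian is $-1$, which is absorbed by flipping the limits back. Thus $I_1 = I_2$ as numerical quantities. Combining the two displays gives $I_1 = I_2 = \overline{I_1}$, so $I_1$, and therefore $J(\kappa,x)$, is real.

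Multiplying this real quantity by the explicit $i$ and the real prefactor then produces a purely imaginary value for $w_\kappa(x,0)$, which is the conclusion. I do not anticipate any real obstacle here; the only thing to be careful about is ensuring that the branch of $s^{i/2\kappa}(1-s)^{-i/2\kappa}$ in use is the principal one so that complex conjugation really acts pointwise as claimed, and that the $\csch(\pi/2\kappa)$ in the prefactor is manifestly real for real $\kappa$ (which it is). Everything else is just bookkeeping of the two elementary symmetries: complex conjugation of the integrand and the involution $s\mapsto 1-s$.
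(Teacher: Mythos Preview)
Your proposal is correct and follows essentially the same route as the paper's own proof: factor out the explicit $i$ together with a real prefactor, then show the remaining integral $I_1$ is real by verifying that complex conjugation and the involution $s\mapsto 1-s$ both carry $I_1$ to $I_2$, forcing $I_1=\overline{I_1}$. The only cosmetic difference is that the paper works with the equivalent prefactor $2i\kappa x/\bigl(\Gamma(1-\tfrac{i}{2\kappa})\Gamma(1+\tfrac{i}{2\kappa})\bigr)$ and observes the Gamma product is real, whereas you use the already-simplified $\csch$ form; the substance is identical.
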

\begin{proof}
In equation \eqref{eq:s:solution} we plug the exponential from the numerator to factor out expression \eqref{eq:int1}. We are left with $2i\kappa x$ in the numerator, which is purely imaginary. The denominator is $\Gamma\left(1-\frac{i}{2\kappa}\right)\Gamma \left(1+\frac{i}{2\kappa}\right)$. By conjugating the Gamma function we see that this product is real. Hence, it suffices to show that expression \eqref{eq:int1} is real. We conjugate it and obtain equation \eqref{eq:int2}. The same equation came from the \emph{change-of-variables} trick \eqref{eq:trick1}, which means that the integral equals its conjugate, hence it is real, as desired. 
\end{proof}

\section{FGR Formalism For Whittaker wavepackets} \label{sec:s:fgr}
The goal of this section is to determine the transition rates from the Whittaker wavepackets \eqref{main:eq:whittaker_packet} to the bound states of the hydrogen atom. Our approach is based on computing the matrix elements of the $S$-matrix \cite{peskin1995}.
The $S$-matrix is given through the matrix elements of the time-ordered unitary evolution operator as follows
$$
S_{\text{fi}}=\bra{\mrm{f}} T \exp\left[ -\frac{i}{\hbar} \int_0^t \hat{H}_{\text{int}} \dd t'\right] \ket{\mrm{i}},
$$
for an initial state $\ket{i}$ and a final state $\ket{f}$, where the interaction Hamiltonian is given by
\eqb
\hat{H}_{\text{int}} = \int \dd^3 \vb{x} \psi^* \hat{H}_{\text{para}}(\vb{x},t) [\psi] = -\frac{i\hbar e}{m_e} \int \dd^3 \vb{x} \psi^* \vu{A}(\vb{x},t) \cdot \grad \psi.
\eqe
Then the infinitesimal probability of transition from $\ket{i}$ to $\ket{f}$ is given by the following equation 
\eqb \label{eq:s:prob}
\dd P_{\mrm{fi}}(\vb{k},\lambda) = \frac{V\dd^3 \vb{k}}{(2\pi)^3}|S_{\mrm{fi}}(\vb{k},\lambda)|^2,
\eqe
where $V$ is a finite volume needed for defining our measure. 
We would like to integrate (and sum) over all possible transition $\ket{\mrm{i}} \to \ket{\mrm{f}}$ involving the emission of a photon $\gamma(\vb{k},\lambda)$. 
In the Heisenberg picture, the vector potential looks as follows 
$$
\vu{A}(\vb{x},t) = \sum_{\vb{k},\lambda=1,2} \sqrt{\frac{\hbar}{2\epsilon_0 \omega_{\vb{k}} V}} \left( e^{i(\vb{k}\cdot \vb{x} - \omega_{k}t)} a_{\vb{k}\lambda}\vu{\varepsilon}_{\vb{k}\lambda} + 
e^{-i(\vb{k}\cdot \vb{x} - \omega_{k}t)} a^\dag_{\vb{k}\lambda}\vu{\varepsilon}^*_{\vb{k}\lambda}
\right),
$$
where there is a photon with momentum $\vb{k}$ and polarization $\lambda$, and the $a$-operator, with its conjugate $a^\dag$, are respectively the annihilation and creation operators for the same photon. The photon has a frequency $\omega_k$ and a polarized direction $\hat{\varepsilon}_{\vb{k}\lambda}.$ We are interested in the coupling of the atom to the EM field, so we concentrate on the paramagnetic term of the total Hamiltonian, which is given as follows 
$$
\hat{H}_{\text{para}}(\vb{x},t) = \frac{e}{m_e}\vu{A}(\vb{x},t) \cdot \vu{p},
$$
where $m_e$ is the mass of the electron. Suppose we look at the spontaneous emission from the initial state $\ket{\mrm{i}}$ to the final state $\ket{\mrm{f}}$ by emitting a photon.

Next we simplify the $S$-matrix. For $\bra{\vb{x}}\ket{f}(t) = \Psi_{\mrm{fin.}}(\vb{x},t)$ and $\bra{\vb{x}}\ket{i}(t)=\Psi_{\mrm{in.}}(\vb{x},t)$, up to first order, we obtain the following
$$
S_\mrm{fi}(\vb{k},\lambda)=-\frac{e}{m_e}\sqrt{\frac{\hbar}{2\epsilon_0 \omega_{k} V}} \int_0^t \dd t' e^{i\omega_{k}t'} \hat{\varepsilon}_{\vb{k}\lambda} \cdot
\int \dd^3 \vb{x} \Psi_{\mrm{fin.}}(\vb{x},t')^* \exp(-i\vb{k}\cdot\vb{x}) \grad \Psi_{\mrm{in.}}(\vb{x},t'). 
$$
The last equation yields the following expansion of $|S_\mrm{fi}(\vb{k},\lambda)|^2$ as an Einstein summation
\eqb \label{eq:ein_sum}
\frac{e^2\hbar}{2m_e \epsilon_0 \omega_{k} V} \int_0^t \dd t' \int_0^t \dd t'' e^{i\omega_k(t'-t'')} \cdot  
\int \dd^3 \vb{x} \dd^3 \vb{y} \varepsilon_{\vb{k}\lambda}^i \varepsilon_{\vb{k}\lambda}^j \Psi_{\mrm{fin.}}^*(\vb{x},t') \Psi_{\mrm{fin.}}(\vb{y},t') \exp(i\vb{k}\cdot (\vb{y} - \vb{x})) \cdot 
\pdv{\Psi_{\mrm{in.}}(\vb{x},t')}{x_i} \pdv{\Psi^*_{\mrm{in.}}(\vb{y},t')}{y_j}  .
\eqe
Now, we rely on the following chain of basic derivations
\begin{align}
\hat{k} = \frac{\vb{k}}{|\vb{k}|}; & \ \ \ \text{normalization}\\
\label{eq:s:ortho}
\hat{\varepsilon}_{\vb{k}\lambda_1} \otimes \hat{\varepsilon}_{\vb{k}\lambda_1} +  
\hat{\varepsilon}_{\vb{k}\lambda_2} \otimes \hat{\varepsilon}_{\vb{k}\lambda_2} + 
\hat{k} \otimes \hat{k} = \vb{1}_3; & \ \ \ \text{orthogonality}\\ 
\label{eq:s:ein}
\sum_{\lambda} \varepsilon_{\vb{k}\lambda}^i \varepsilon_{\vb{k}\lambda}^j + \hat{k}_i \hat{k}_j = \delta_{ij}; & \ \ \ \text{component-wise \eqref{eq:s:ortho}}\\
\label{eq:s:r}
\sum_{\lambda} \varepsilon_{\vb{k}\lambda}^i \varepsilon_{\vb{k}\lambda}^j = \delta_{ij} - \frac{\hat{k}_i \hat{k}_j}{|\vb{k}|^2}; & \ \ \ \text{rewriting \eqref{eq:s:ein}}\\ 
\label{eq:v:1}
v_i = \int \dd^3 \vb{x} \exp(-i \vb{k} \cdot \vb{x}) \Psi_{\mrm{fin.}}^*(\vb{x},t')\pdv{\Psi_{\mrm{in.}}(\vb{x},t)}{x_i}; & \ \ \ \text{extracted from equation \eqref{eq:ein_sum}}\\ 
\label{eq:v:2}
v^*_j = \int \dd^3 \vb{y} \exp(-i \vb{k} \cdot \vb{y}) \Psi_{\mrm{fin.}}(\vb{y},t')\pdv{\Psi_{\mrm{in.}}^*(\vb{y}, t')}{y_j}; & \ \ \ \text{extracted from equation \eqref{eq:ein_sum}}\\ 
v_i(\delta_{ij} - \hat{k}_i \cdot \hat{k}_j) v_j^* = |\vb{v}|^2-|\vb{v}\cdot \hat{k}|^2=|\vb{v}\cross \hat{k}|^2. \label{eq:fgr:chain:last}  & \ \ \ \text{combining \eqref{eq:s:r}, \eqref{eq:v:1} and \eqref{eq:v:2}}
\end{align}
Finally, to sum over the polarizations \eqref{eq:s:prob}, we need the last step from the chain of the derivations to obtain 
\eqb \label{eq:s:gen}
\sum_{\lambda} |S_\mrm{fi}(\vb{k},\lambda)|^2 = \frac{e^2\hbar}{2m_e^2\epsilon_0 \omega_{k}V}\bigg| \int_0^t \dd t' \int \dd^3 \vb{x}  e^{i\omega_{k}t'}\Psi_{\mrm{fin.}}^*(\vb{x}, t') \cdot \\ e^{-i\vb{k}\cdot \vb{x}}\left(\hat{k}\cross \grad \Psi_{\mrm{in.}}(\vb{x},t')\right) \bigg| ^2.
\eqe

Equation \eqref{eq:s:gen} is the most general formula in our analysis. From now on we assume that $\Psi_\mrm{in.}$ is the Whittaker wavepacket \eqref{main:eq:whittaker_packet} and that $\Psi_\mrm{fin.}$ is the standard bound state ${\psi}^{*}_{nlm}(\vb{x})e^{-i\omega_n t}$. Thus, in our case, $\ket{i}$ is given by \eqref{main:eq:whittaker_packet}, which means that the initial state is spherically symmetric, hence the quantum number $l$ is zero and thus has no angular dependence. Applying those remarks to \eqref{eq:s:gen} we directly get 
$$
\sum_{\lambda} |S_\mrm{fi}(\vb{k},\lambda)|^2 = \frac{e^2\hbar}{2m_e^2\epsilon_0 \omega_{k}V}\bigg| \int_0^t \dd t' \int \dd^3 \vb{x}  e^{i(\omega_{k}+\omega_n)t'}\psi^*_{nlm}(\vb{x}) \cdot \\ e^{-i\vb{k}\cdot \vb{x}}\hat{k}\cross \grad \Psi_{E,\Delta E}(\vb{x},t) \bigg| ^2.
$$

To further simplify this formula we express both the position $\vb{x}$ and momentum $\vb{k}$ vectors in spherical coordinates: $(r,\theta_{x},\phi_{x})$ and $(k,\theta_{k},\phi_{k}).$ We use the formula for wavepackets \eqref{main:eq:whittaker_packet} and the decomposition of the bound state into a radial part and a spherical harmonic: $\psi_{nlm}=R_{nl}Y_{lm}$, as parameterized in \cite{griffiths2004}. The azimuth dependence for $\vb{k}$ is trivially $2\pi$. Then after simplifying, and integrating \eqref{eq:s:prob} over $\vb{k}$ and using the conventions from the previous paragraph, the formula for the Whittaker's probability of decay becomes as follows
\begin{multline} \label{eq:s:fgr_num}
P(t) = \frac{e^2\hbar}{16m_e^2\epsilon_0 c \pi^3 a_0^4}  \int_0^{\infty} k \dd k  \int_0^{2\pi} \dd \phi_k \int_0^\pi \sin \theta_{k} \dd \theta_{k} \cdot 
\left| \int \dd \phi_x \sin \theta_{x} \dd \theta_{x} (\hat{k} \cross \hat{x})Y_{lm}^*(\theta_k,\phi_k)e^{-i\vb{k} \cdot \vb{x}}\right|^2 \\
\Bigg| \int_{-\infty}^{\infty} \dd z \left( e^{-(z-\mu)^2/2\sigma^2}
\cdot
\left[ \int_0^\infty r^2 \dd r R_{nl}^*(r) \pdv{w_z(r/a_0,0)}{r}\right] 
\frac{e^{(\omega_k+\omega_n-\omega z^2)t}-i}{\omega_k+\omega_n-\omega z^2}\right) \Bigg|^2,
\end{multline}
where $\vb{k} \cdot \vb{x} = kr(\sin \theta_x \sin \theta_k \cos (\phi_x - \phi_k) + \cos \theta_x \cos \theta_k)$ and in the standard basis $\hat{k} \cross \hat{x} = (\sin \phi_k \sin \theta_k \cos \theta_x - \sin \phi_x \sin \theta_x \cos \theta_k, \cos \phi_x \sin \theta_x \cos \theta_k - \sin \phi_k \sin \theta_k \cos \theta_x, \sin \theta_x \sin \theta_k \sin (\phi_x - \phi_k)).$

\section{Numerical experiments} \label{sec:s:num}

Our numerical experiments are synthesized in modules in the software \textit{Mathematica}, available at \cite{dangovski2017}. In this section we discuss methods for calculating the parameters $\Delta r$ \eqref{main:eq:spread}, $\Delta t$ \eqref{main:eq:lifetime} and the average rate $\tilde{\Gamma}=P\left(2(\Delta t)\right)/2(\Delta t)$.

\begin{itemize}
\item{Diffraction lifetime $\Delta t$:
to compute the lifetime we need to evaluate the overlap function 
and then fit into the anzats given by Theorem \ref{suppl:thm:anzats}. By definition we need to integrate for $r$ in the range $(0,\infty)$. In practice, this integration can be done efficiently by observing that the overlap $\mathbb{O}(t)$ is well-approximated by a Gaussian form. Moreover, by a numerical experiment we find that the shape of the decay does not change significantly if we only evaluate the integral up to some finite number. Therefore, we chose $(0,5a_0)$ as the range of integration. The procedure yields the constant of proportionality in \eqref{main:eq:lifetime}. The uncertainty coming from our fits is less than 5\% and hence it does not affect the claims of the paper.
}
\item{Spatial spread $\Delta r$:
to compute $\Delta r$ we observe that the upper envelope of the wavepackets at time zero is converging to the right half of a Gaussian as $\Delta E$ tends to zero. Hence, we extract the envelope numerically and then we fit a Gaussian for the data points. 
}
\item{Average rate of decay $\tilde{\Gamma}$:
to compute $\tau$ we first evaluate the integral \eqref{eq:s:fgr_num}. Our computations allow us to plot $P(t)$ as a function of time. In order to simplify our calculations we assume the dipole approximation. Note that this approximation is accurate for transition energies in the visible spectrum, but becomes less accurate for transition energies in the soft and hard X-rays. Moreover, the approximation is useful for better understanding the results because it enables a selection rule for the quantum number $l$ of the bound state $\psi_{nlm}$: $l$ = 1 since $\Psi_{E,\Delta E}$ is spherically symmetric and hence $m=-1,0,1$ (the probability of decay is independent of the selected $m$). Another interesting feature for the plot of $P(t)$ (shown in figure \ref{suppl:table:fgr}) is that there is a steady state for the probability, i.e. after a certain point in time, the rate of decay becomes zero. We can explain this phenomenon from a physical point of view: we know that the Whittaker wavepackets spreads with time. Hence, after some time the electron will be far away from the hydrogen atom range, so its overlap with the bound states vanishes. Figure \ref{suppl:table:fgr} also shows the decay probability for a range of energies and their energy spreads up to a transition energy at the soft X-ray spectrum. The plots start with a quadratic behavior and then switch to a linear regime before they reach their steady state. Note that this behavior of FGR probability resembles the one for transitions between bound states. 
}
\end{itemize}

\begin{figure}[ht]
  \begin{adjustbox}{addcode={\begin{minipage}{\width}}{\end{minipage}},rotate=90,center}
      \includegraphics[scale=0.65]{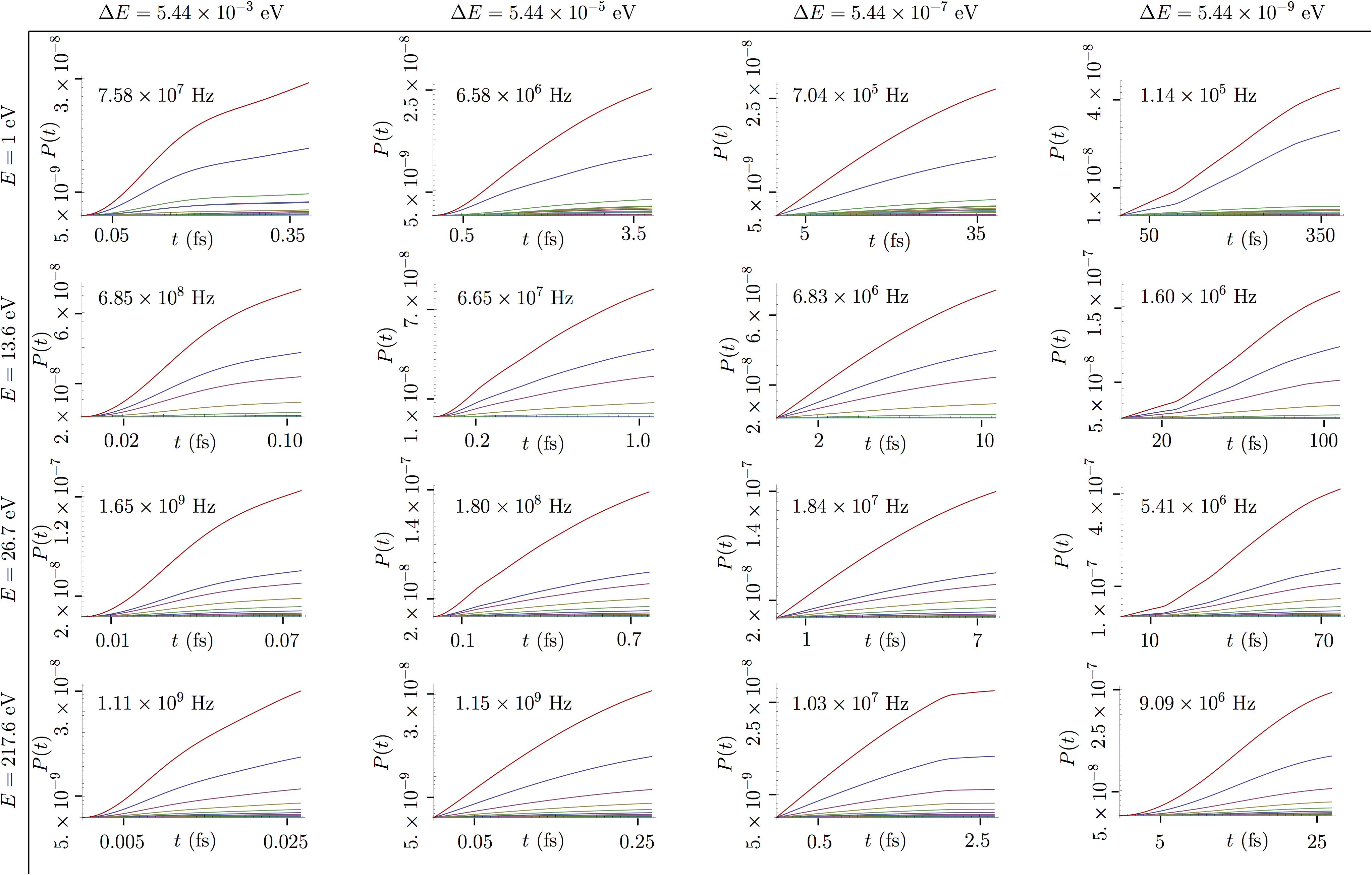}%
  \end{adjustbox}
  \caption{\textbf{The spontaneous emission dynamics of our wavepacket to bound states, marking the average transition rates.}
The transition dynamics is obtained based on the FGR formalism \eqref{eq:s:prob}, exhibiting a monotonous patterns in the case of varying $\Delta E$. The dipole approximation is used to simplify the calculation (the formalism can be applied more generally as discussed in section \ref{sec:s:fgr}).
\label{suppl:table:fgr}
}
\end{figure}

\end{document}